\newtheorem{proposition}{Proposition}
\theoremstyle{definition}
\theoremstyle{remark}
\newtheorem{remark}{Remark}
\newtheorem{assumption}{Assumption}
\pgfplotsset{compat=newest}
\def\axisdefaultheight{110pt}
\DeclareMathOperator*{\argmin}{arg\,min}
\DeclareRobustCommand{\vect}[1]{\bm{#1}}
	\renewcommand{\vect}[1]{#1}%
\title{\LARGE \bf
Learning-Based Model Predictive Control for Piecewise Affine Systems with Feasibility Guarantees
}
\author{Samuel Mallick, Azita Dabiri, Bart De Schutter
\thanks{
This project has received funding from the European Research Council (ERC) under the European Union's Horizon 2020 research and innovation programme (Grant agreement No. 101018826 - ERC Advanced Grant CLariNet).}
\thanks{
The authors are with the Delft Center for Systems and Control, Delft University of Technology,
Delft, The Netherlands,
{\tt\small \{s.h.mallick, a.dabiri, b.deschutter\}@tudelft.nl}}%
}
\begin{document}

\maketitle
\thispagestyle{empty}
\pagestyle{empty}

\begin{abstract}
Online model predictive control (MPC) for piecewise affine (PWA) systems requires the online solution to an optimization problem that implicitly optimizes over the switching sequence of PWA regions, for which the computational burden can be prohibitive.
Alternatively, the computation can be moved offline using explicit MPC; however, the online memory requirements and the offline computation can then become excessive.
In this work we propose a solution in between online and explicit MPC, addressing the above issues by partially dividing the computation between online and offline.
To solve the underlying MPC problem, a policy, learned offline, specifies the sequence of PWA regions that the dynamics must follow, thus reducing the complexity of the remaining optimization problem that solves over only the continuous states and control inputs.
We provide a condition, verifiable during learning, that guarantees feasibility of the learned policy's output, such that an optimal continuous control input can always be found online.
Furthermore, a method for iteratively generating training data offline allows the feasible policy to be learned efficiently, reducing the offline computational burden.
A numerical experiment demonstrates the effectiveness of the method compared to both online and explicit MPC. 
\end{abstract}


\section{INTRODUCTION}
Traditional online model predictive control (MPC) for piecewise affine (PWA) systems requires the online solution to an optimization problem that optimizes over, not only the continuous states and control actions, but also implicitly over the sequence of PWA regions.
Through the introduction of auxiliary variables this problem can be reformulated as a mixed-integer linear/quadratic program (MILP/MIQP) \cite{bemporadControlSystemsIntegrating1999}.
Unfortunately, MILP/MIQPs are NP-hard problems, and in the context of MPC the computational burden can grow exponentially with the size of the system and the length of the prediction horizon \cite{wolsey2014integer}, often limiting a real-time implementation.

Explicit MPC methodology has been proposed to address this issue, shifting the computational burden offline \cite{oberdieckExplicitHybridModelPredictive2015}.
Here, a multiparametric MILP/MIQP (mp-MILP/MIQP) is solved offline, characterizing the optimal control input as a function of the system state.
This solution takes the form of a partition of the state space, with a PWA control law associated to each region of the partition, such that the online computation reduces to a lookup table and a function evaluation.
However, the computational complexity of the mp-MILP/MIQP is such that this approach is in general limited to small-scale systems.
Furthermore, the solution's partition can be exceedingly complex, requiring significant online memory space to store the regions.

As an alternative to offline MPC, machine learning methods have been employed to address the computational burden of mixed-integer-based MPC.
In \cite{mastiLearningBinaryWarm2019} a learned policy provides warm-start guesses for an MIQP solver, in order to decrease the required solution time; however, in the worst case the computational burden remains exponential in the problem size.
In \cite{mastiLearningApproximateSemiExplicit2020, cauligi2021coco, cauligiPRISMRecurrentNeural2022, da2024integrating} a policy is learned that selects and fixes the configuration of integer variables prior to solving the MPC optimization problem, which is then convex, and far less demanding to solve.
However, these approaches do not guarantee that the remaining optimization problem is feasible after the integer variables are chosen.

In light of the above issues, this work presents a learning-based MPC controller for PWA systems in which the feasibility of the learned policy's output is guaranteed.
We propose an approach that follows the principle of \cite{mastiLearningApproximateSemiExplicit2020, cauligi2021coco, cauligiPRISMRecurrentNeural2022, da2024integrating} in decoupling the selection of discrete components, i.e., the sequence of PWA regions, and the continuous components, i.e., the state and control input trajectories, into a learned policy and a linear MPC optimization problem, respectively.
Leveraging the properties of PWA dynamics, we formulate a class of classifiers whose output can be verified for feasibility during learning.
Furthermore, we introduce an iterative procedure for generating training data such that the feasible policy is learned efficiently.

The contributions of this work are as follows. 
As only a linear MPC problem is solved online, relative to the original online MPC problem the proposed approach has a much lower online computational burden.
Furthermore, the online memory requirement for the learned policy is significantly less than that of offline MPC, as learning the sequence of PWA regions is less complex than learning the optimal control signal.
Finally, in contrast to existing approaches, the learned policy provably selects a feasible output.

The remainder of this paper is organized as follows. 
In Section \ref{sec:problem_setting} the problem setting is defined.
In Section \ref{sec:properties} useful structural properties of the MPC optimization problem are provided.
Section \ref{sec:learning_based_mpc} introduces our proposed controller, which is then demonstrated in a numerical example in Section \ref{sec:example}.
Finally, Section \ref{sec:conclusions} concludes the paper. 

\subsection{Definitions}
\begin{color}{black}
We define a \emph{polytope} as the intersection of a finite number of (open or closed) halfspaces, which is then convex.
A collection of sets $P_1,\dots,P_l$ forms a \emph{partition} of the set $P$ if $\bigcup_{i = 1}^l P_i = P$, $P_i \neq \emptyset$ for all $i$, and $P_i \cap P_j = \emptyset$ for all $i \neq j$.
If each set $P_i$ is a polytope they form a \emph{convex partition} of $P$.
For a set $\mathcal{X}$ we denote its convex hull as $\text{Conv}(\mathcal{X})$, and its closure as $\bar{\mathcal{X}}$.
The set of vertices of a polytope $P$ is denoted as $\text{Vert}(P)$.
Finally, indexing via subscripts represents closed-loop time steps, e.g., $x_k$, while brackets represent time steps within an MPC prediction horizon, e.g., $x(k)$.\end{color}

\section{PROBLEM SETTING}\label{sec:problem_setting}
\begin{color}{black}
Consider the discrete-time PWA system with state $x \in \mathcal{X} \subseteq \mathbb{R}^n$, input $u \in \mathcal{U} \subseteq \mathbb{R}^m$, and affine dynamics over a convex partition $\{P_i\}_{i=1}^l$ of $\mathcal{X}$ with $l$ polytopes indexed by $i \in \mathcal{L}=\{1, \dots, l\}$, referred to as regions,
\begin{equation}\label{eq:dynamics}
	x(k+1) = A_i x(k) + B u(k) + c_i, \: x(k) \in P_i.
\end{equation}
The states and inputs are constrained to the sets $\mathcal{X}$ and $\mathcal{U}$, respectively.
\begin{assumption}
	The dynamics are assumed to be continuous, i.e., $A_ix + Bu + c_i = A_j x + Bu + c_j$ for all $x \in \bar{P}_i \cap \bar{P}_j$, such that the dynamics can be considered over the closures $\{\bar{P}_i\}_{i \in \mathcal{L}}$ without the need for a set-valued function.
	The sets $\mathcal{X}$ and $\mathcal{U}$ are assumed to be compact polytopes with the origin in their interior.
	There exists a terminal set $\mathcal{X}_\text{f} \subseteq \mathcal{X}$ that is a polytope, assumed to contain the origin in its interior, and to be positive invariant under some linear control laws, i.e., there exists $K_i \in \mathcal{R}^{m \times n}$, for all $i$ such that $P_i \cap \mathcal{X}_\text{f} \neq \emptyset$, such that $(A_i + B K_i)x \in \mathcal{X}_\text{f}$ for all $x \in P_i \cap \mathcal{X}_\text{f}$.
\end{assumption}

We consider an MPC controller for the system \eqref{eq:dynamics}.
Define $\bm{\delta} = \big(\delta(0), \dots, \delta(N)\big) \in \mathbb{Z}_{[0, l]}^{N+1}$ as a switching sequence that specifies the PWA regions over a prediction horizon of $N$.
The control input at state $x$ is then computed via the following optimization problem:
\begin{subequations}\label{eq:mpc_full}
	\begin{align}
		J(x) = &\min_{\mathbf{x}, \mathbf{u}, \bm{\delta}} \: J(\mathbf{x}, \mathbf{u}) \\
		\text{s.t.} \quad &x(0) = x, \label{eq:first_cnstr} \\
		&x(k+1) = A_{\delta(k)} x(k) + B u(k) + c_{\delta(k)} \label{eq:dynamic_constraint} \\
		&\quad\text{for} \: k = 0, \dots, N-1, \nonumber\\
		&x(k) \in \bar{P}_{\delta(k)} \:\:\text{for}\:\: k = 0, \dots, N,\label{eq:region_constraint} \\
		&\big(x(k), u(k)\big) \in \mathcal{X} \times \mathcal{U} \:\:\text{for}\:\: k = 0, \dots, N-1, \\
		&x(N) \in \mathcal{X}_\text{f},\label{eq:second_cnstr}
	\end{align}
\end{subequations}
where $\mathbf{x} = \big(x^\top(0),\dots,x^\top(N)\big)^\top$ and $\mathbf{u} = \big(u^\top(0),\dots,u^\top(N-1)\big)^\top$.
In \eqref{eq:region_constraint} the closure $\bar{P}_{\delta(k)}$ can be used as the dynamics are continuous.
\end{color}
 
If no solution exists for problem \eqref{eq:mpc_full}, by convention $J(x) = \infty$.
Define the set of feasible states $\mathcal{X}_0 = \big\{x | J(x) < \infty\big\}$.
Once \eqref{eq:mpc_full} is solved online numerically, the first of the optimal control inputs $u^\ast(0)$ is applied to the system, with the problem resolved again at each time step in a receding horizon fashion.
However, due to optimizing over the switching sequence $\bm{\delta}$, problem \eqref{eq:mpc_full} is non-convex, and may be computationally intensive to solve.
We hence introduce an alternative MPC controller as a function of both $x$ and $\bm{\delta}$
\begin{equation}\label{eq:mpc}
	J(x, \bm{\delta}) = \min_{\mathbf{x}, \mathbf{u}} J(\mathbf{x}, \mathbf{u}), \quad \text{s.t.} \: \eqref{eq:first_cnstr}-\eqref{eq:second_cnstr},
\end{equation}
where again $J(x, \bm{\delta}) = \infty$ if no solution exists.
With $\bm{\delta}$ prespecified, the dynamics \eqref{eq:dynamic_constraint} are linear time-varying and, for convex $J$, problem \eqref{eq:mpc} is a convex problem that can be solved efficiently online.
Define the set of feasible states for a given $\bm{\delta}$, with $\delta(0)=i$, as $\mathcal{X}_{0, \bm{\delta}} = \big\{x | J(x, \bm{\delta}) < \infty\big\} \subseteq \bar{P}_i \cap \mathcal{X}_0$.
As \eqref{eq:mpc} includes polytopic constraints and linear dynamics, $\mathcal{X}_{0, \bm{\delta}}$ is a polytope \cite{borrelliPredictiveControlLinear2016}.
In the following we address how $\bm{\delta}$ is prespecified, and how it is ensured that $J(x, \bm{\delta}) < \infty$ if $J(x) < \infty$.

\section{STRUCTURE OF $J(x, \vect{\delta})$}\label{sec:properties}
\begin{color}{black}
In this section we analyze the structure of the optimal switching sequence $\bm{\delta}^\ast(x) = \argmin_{\bm{\delta}} J(x, \bm{\delta})$, and provide a result on the feasibility of choices for $\bm{\delta}$, i.e., when $J(x, \bm{\delta}) < \infty$, that is useful in the sequel.
Note that $\bm{\delta}^\ast(x)$ is not necessarily unique due to the closure $\bar{P}_{\delta}$ in \eqref{eq:region_constraint}, e.g., when $x \in \bar{P}_i\cap\bar{P}_j$ then there are at least two valid $\bm{\delta}^\ast(x)$ values, with $i$ and $j$ as the first elements, respectively.

Consider cost functions of the form
\begin{equation}
	J(\mathbf{x}, \mathbf{u}) = \sum_{k=0}^{N-1} \Big(\big\|Qx(k)\big\|_{p_1} + \big\|Ru(k)\big\|_{p_2}\Big) + \big\|Px(N)\big\|_{p_3},
\end{equation}
where $p_1, p_2, p_3\in \{1,\infty\}$, and $Q$, $R$, and $P$ are of rank $n, m$, and $n$, respectively, such that \eqref{eq:mpc_full} can be reformulated as an MILP and \eqref{eq:mpc} as a linear program (LP).
\subsection{Structure of $\vect{\delta}^\ast(x)$}
In \cite{borrelliDynamicProgrammingConstrained2005}, the structure of the optimal control law $u^\ast(0)$ for \eqref{eq:mpc_full} is proven.
Here, we use a similar line of reasoning to show the structure of $\bm{\delta}^\ast(x)$.
A graphical illustration of the following Proposition is given in Figure \ref{fig:prop_1}.
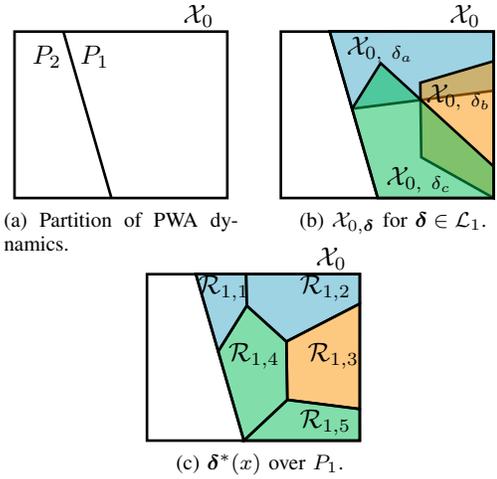
\begin{figure}
	\centering
	\subfloat[Partition of PWA dynamics.\label{fig:prop_1_a}]{\tikzset{every picture/.style={line width=1pt}} 
\begin{tikzpicture}[x=0.75pt,y=0.75pt,yscale=-0.4,xscale=0.4]
	
	\draw   (178,33) -- (446.33,33) -- (446.33,243) -- (178,243) -- cycle ;
	\draw    (239.53,32.67) -- (300.2,243) ;
	
	\draw (240,41.6) node [anchor=north west][inner sep=0.75pt]    {$ \begin{array}{l}
			P_{1}\\
		\end{array}$};
	\draw (180,41.6) node [anchor=north west][inner sep=0.75pt]    {$ \begin{array}{l}
			P_{2}\\
		\end{array}$};
	\draw (370,-10) node [anchor=north west][inner sep=0.75pt]    {$ \begin{array}{l}
			\mathcal{X}_{0}\\
		\end{array}$};

\end{tikzpicture}} \hspace{0.3cm}
	\subfloat[$\mathcal{X}_{0, \bm{\delta}}$ for $\bm{\delta} \in \mathcal{L}_1$.\label{fig:prop_1_b}]{\tikzset{every picture/.style={line width=1pt}}
\begin{tikzpicture}[x=0.75pt,y=0.75pt,yscale=-0.4,xscale=0.4]
	
	\draw   (178,33) -- (446.33,33) -- (446.33,243) -- (178,243) -- cycle ;
	\draw    (239.53,32.67) -- (300.2,243) ;
	\draw  [fill={rgb, 255:red, 62; green, 167; blue, 199 }  ,fill opacity=0.5 ] (446.33,33) -- (446.22,107.57) -- (267.82,130.37) -- (239.53,32.67) -- cycle ;
	\draw  [fill={rgb, 255:red, 252; green, 146; blue, 0 }  ,fill opacity=0.5 ] (446.28,70.28) -- (446.33,243) -- (355.02,191.57) -- (353.82,97.57) -- cycle ;
	\draw  [fill={rgb, 255:red, 3; green, 189; blue, 87 }  ,fill opacity=0.5 ] (303.82,72.77) -- (446.22,203.17) -- (446.33,243) -- (300.2,243) -- (267.82,130.37) -- cycle ;
	
		\draw (370,-10) node [anchor=north west][inner sep=0.75pt]    {$ \begin{array}{l}
			\mathcal{X}_{0}\\
		\end{array}$};
	\draw (240,30) node [anchor=north west][inner sep=0.75pt]    {$ \begin{array}{l}
			\mathcal{X}_{0,\ \delta_{a}}\\
		\end{array}$};
	\draw (340,90) node [anchor=north west][inner sep=0.75pt]    {$ \begin{array}{l}
			\mathcal{X}_{0,\ \delta_{b}}\\
		\end{array}$};
	\draw (290,195) node [anchor=north west][inner sep=0.75pt]    {$ \begin{array}{l}
			\mathcal{X}_{0,\ \delta_{c}}\\
		\end{array}$};

\end{tikzpicture}}\\
	\vspace{-0.6cm}
	\subfloat[$\bm{\delta}^\ast(x)$ over $P_1$.\label{fig:prop_1_c}]{\tikzset{every picture/.style={line width=1pt}} 

\begin{tikzpicture}[x=0.75pt,y=0.75pt,yscale=-0.4,xscale=0.4]
	
	\draw   (178,33) -- (446.33,33) -- (446.33,243) -- (178,243) -- cycle ;
	\draw    (239.53,32.67) -- (300.2,243) ;
	\draw  [fill={rgb, 255:red, 3; green, 189; blue, 87 }  ,fill opacity=0.5 ] (303.82,72.77) -- (353.82,118.01) -- (355.02,191.57) -- (300.2,243) -- (267.82,130.37) -- cycle ;
	\draw  [fill={rgb, 255:red, 62; green, 167; blue, 199 }  ,fill opacity=0.5 ] (303.02,32.81) -- (303.82,72.77) -- (267.82,130.37) -- (239.53,32.67) -- cycle ;
	\draw  [fill={rgb, 255:red, 62; green, 167; blue, 199 }  ,fill opacity=0.5 ] (303.02,32.81) -- (303.82,72.77) -- (353.82,118.01) -- (446.28,70.28) -- (446.33,33) -- cycle ;
	\draw  [fill={rgb, 255:red, 252; green, 146; blue, 0 }  ,fill opacity=0.5 ] (353.82,118.01) -- (446.28,70.28) -- (446.22,203.17) -- (355.02,191.57) -- cycle ;
	\draw  [fill={rgb, 255:red, 3; green, 189; blue, 87 }  ,fill opacity=0.5 ] (446.22,203.17) -- (446.33,243) -- (300.2,243) -- (355.02,191.57) -- cycle ;
	
		\draw (370,-10) node [anchor=north west][inner sep=0.75pt]    {$ \begin{array}{l}
			\mathcal{X}_{0}\\
		\end{array}$};
	\draw (222,25) node [anchor=north west][inner sep=0.75pt]    {$ \begin{array}{l}
			\mathcal{R}_{1,1}\\
		\end{array}$};
	\draw (350,25) node [anchor=north west][inner sep=0.75pt]    {$ \begin{array}{l}
			\mathcal{R}_{1,2}\\
		\end{array}$};
	\draw (360,110) node [anchor=north west][inner sep=0.75pt]    {$ \begin{array}{l}
			\mathcal{R}_{1,3}\\
		\end{array}$};
	\draw (350,195) node [anchor=north west][inner sep=0.75pt]    {$ \begin{array}{l}
			\mathcal{R}_{1,5}\\
		\end{array}$};
	\draw (260,110) node [anchor=north west][inner sep=0.75pt]    {$ \begin{array}{l}
			\mathcal{R}_{1,4}\\
		\end{array}$};

\end{tikzpicture}}
	\caption{Illustration of the proof of Proposition \ref{prop:1}. $\mathcal{R}_{1, 1}$ and $\mathcal{R}_{1, 2}$ are associated with $\bm{\delta}^\ast(x) = \bm{\delta}_a$, $\mathcal{R}_{1, 3}$ is associated with $\bm{\delta}^\ast(x) = \bm{\delta}_b$, and $\mathcal{R}_{1, 4}$ and $\mathcal{R}_{1, 5}$ are associated with $\bm{\delta}^\ast(x) = \bm{\delta}_c$.}
	\label{fig:prop_1}
\end{figure}
\begin{proposition}\label{prop:1}
	Consider the problem $\bm{\delta}^\ast(x) \in \argmin_{\bm{\delta}} J(x, \bm{\delta})$ for $x \in \mathcal{X}_0$.
	Define $\mathcal{L}_i = \Big\{\bm{\delta} \in \mathcal{Z}_{[0, l]}^{N+1} \big| \delta(0) = i\Big\}$. Then
	\begin{enumerate}
		\item $\bm{\delta}^\ast(x) \in \mathcal{L}_i$ for all $x \in \bar{P}_i$.
		\item For $i \in \mathcal{L}$ we can define the collection $\mathcal{R}_{i, 1}, \dots, \mathcal{R}_{i, R_i}$ as a convex partition of $P_i \cap \mathcal{X}_0$, where $\bm{\delta}^\ast(x) = \bm{\delta}^\ast_{i, j}$ for all $x \in \bar{\mathcal{R}}_{i, j}$.
		\item The collection $\mathcal{R}_{1,1},\dots,\mathcal{R}_{1, R_1},\dots,\mathcal{R}_{l,1},\dots,\mathcal{R}_{l, R_l}$ is a convex partition of $\mathcal{X}_0$.
	\end{enumerate}
\end{proposition}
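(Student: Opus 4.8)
The plan is to prove the three claims in turn, ultimately building a polyhedral subdivision of $\mathcal{X}_0$ on which $\bm{\delta}^\ast$ is locally constant, in the spirit of the multiparametric-LP argument used for $u^\ast(0)$ in \cite{borrelliDynamicProgrammingConstrained2005}. For claim~1, fix $x \in \bar{P}_i$ and take any $\bm{\delta}^\circ \in \argmin_{\bm{\delta}} J(x, \bm{\delta})$. If $\delta^\circ(0) \neq i$, feasibility of \eqref{eq:region_constraint} at $k = 0$ forces $x \in \bar{P}_i \cap \bar{P}_{\delta^\circ(0)}$, and replacing $\delta^\circ(0)$ by $i$ changes neither the feasible set nor the cost of \eqref{eq:mpc}: the region constraint at $k = 0$ is still met, and by the assumed continuity of the dynamics $A_i x + B u(0) + c_i = A_{\delta^\circ(0)} x + B u(0) + c_{\delta^\circ(0)}$, so $x(1)$ and every later state are unchanged. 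The modified sequence then lies in $\mathcal{L}_i \cap \argmin_{\bm{\delta}} J(x, \bm{\delta})$.

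For claim~2, fix $i$ and work over the finite set $\mathcal{L}_i$. Each $\mathcal{X}_{0, \bm{\delta}}$ is a compact polytope, and since for fixed $\bm{\delta}$ problem \eqref{eq:mpc} is an LP in which $x$ enters only the linear constraints, multiparametric LP theory \cite{borrelliPredictiveControlLinear2016} gives that $J(\cdot, \bm{\delta})$ is continuous, convex, and piecewise affine on $\mathcal{X}_{0, \bm{\delta}}$, i.e.\ affine on each cell of a polyhedral partition of $\mathcal{X}_{0, \bm{\delta}}$. By claim~1, $J(x) = \min_{\bm{\delta} \in \mathcal{L}_i} J(x, \bm{\delta})$ for $x \in P_i \cap \mathcal{X}_0 = P_i \cap \bigcup_{\bm{\delta} \in \mathcal{L}_i} \mathcal{X}_{0, \bm{\delta}}$. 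I would then take a finite family of polytopes $\{C_t\}_t$ partitioning $P_i \cap \mathcal{X}_0$ that refines, for each $\bm{\delta} \in \mathcal{L}_i$, the mpLP partition of $\mathcal{X}_{0, \bm{\delta}}$; such a family exists, e.g.\ as the cells of the hyperplane arrangement generated by all halfspaces defining $P_i$, the $\mathcal{X}_{0, \bm{\delta}}$, and their mpLP cells that lie in $P_i \cap \mathcal{X}_0$. On each $C_t$ every $J(\cdot, \bm{\delta})$ with $C_t \subseteq \mathcal{X}_{0, \bm{\delta}}$ is affine, say $J(x, \bm{\delta}) = a_{t, \bm{\delta}}^\top x + b_{t, \bm{\delta}}$, while the other $J(\cdot, \bm{\delta}) \equiv \infty$ there, so on $C_t$ the minimiser is chosen among affine functions and the set where a given $\bm{\delta}$ attains it is the polytope $C_t \cap \bigcap_{\bm{\delta}'} \{x : a_{t, \bm{\delta}}^\top x + b_{t, \bm{\delta}} \le a_{t, \bm{\delta}'}^\top x + b_{t, \bm{\delta}'}\}$. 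Breaking ties by a fixed lexicographic order on $\mathcal{L}_i$ --- which turns the relevant inequalities strict, permissible since polytopes here may use open halfspaces --- makes these regions pairwise disjoint yet still covering $C_t$. Collecting the nonempty regions over all $C_t$, relabelling them $\mathcal{R}_{i, 1}, \dots, \mathcal{R}_{i, R_i}$ and recording the associated sequence as $\bm{\delta}^\ast_{i, j}$, gives a convex partition of $P_i \cap \mathcal{X}_0$ with $\bm{\delta}^\ast(x) = \bm{\delta}^\ast_{i, j}$ on $\mathcal{R}_{i, j}$. For the closure statement, pick $x \in \bar{\mathcal{R}}_{i, j}$ and $x_n \in \mathcal{R}_{i, j}$ with $x_n \to x$; passing to the limit in $J(x_n, \bm{\delta}^\ast_{i, j}) \le J(x_n, \bm{\delta})$, using continuity of $J(\cdot, \bm{\delta}^\ast_{i, j})$ on the closed polytope $\mathcal{X}_{0, \bm{\delta}^\ast_{i, j}}$ and the convention $J(\cdot, \bm{\delta}) = \infty$ off $\mathcal{X}_{0, \bm{\delta}}$, yields $\bm{\delta}^\ast_{i, j} \in \argmin_{\bm{\delta}} J(x, \bm{\delta})$.

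Claim~3 is then immediate: $\{P_i\}_{i \in \mathcal{L}}$ is a convex partition of $\mathcal{X} \supseteq \mathcal{X}_0$, so the nonempty members of $\{P_i \cap \mathcal{X}_0\}_{i \in \mathcal{L}}$ are pairwise disjoint with union $\mathcal{X}_0$, and concatenating the partitions from claim~2 makes $\{\mathcal{R}_{i, j}\}_{i, j}$ a pairwise disjoint family of polytopes whose union is $\bigcup_i (P_i \cap \mathcal{X}_0) = \mathcal{X}_0$. I expect the main obstacle to be claim~2: since each $J(\cdot, \bm{\delta})$ is only convex piecewise affine, the sublevel sets $\{x : J(x, \bm{\delta}) \le J(x, \bm{\delta}')\}$ need not be convex, so the value functions cannot be compared directly. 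The key device is to refine onto cells where all relevant value functions are simultaneously affine --- reducing the comparison to one among affine functions --- and then to use open halfspaces for tie-breaking so the ``$\bm{\delta}$ optimal'' regions genuinely partition rather than merely cover; the accompanying limiting argument that carries optimality of $\bm{\delta}^\ast_{i, j}$ over to $\bar{\mathcal{R}}_{i, j}$ is the other delicate point.
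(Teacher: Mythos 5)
Your proposal follows essentially the same route as the paper's proof: claim 1 via the region constraint at $k=0$, claim 2 by exploiting that each $J(\cdot,\bm{\delta})$ is piecewise affine on the polytope $\mathcal{X}_{0,\bm{\delta}}$ and comparing the finitely many value functions over $\mathcal{L}_i$ to carve $P_i\cap\mathcal{X}_0$ into polytopes on which one sequence attains the minimum, and claim 3 by concatenating over $i$. You are in fact more careful than the paper on several points the published proof glosses over: the swap of $\delta^\circ(0)$ to $i$ on shared boundaries using continuity of the dynamics (the paper merely asserts that the first element must be $i$), the common refinement into cells where all relevant value functions are simultaneously affine, and the lexicographic tie-breaking with open halfspaces to obtain genuine disjointness (consistent with the paper's definition of polytopes and partitions).

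The one step that does not go through as written is your closure argument. From $J(x_n,\bm{\delta}^\ast_{i,j})\le J(x_n,\bm{\delta})$ and $x_n\to x$ you get $J(x,\bm{\delta}^\ast_{i,j})\le\liminf_n J(x_n,\bm{\delta})$; to conclude $J(x,\bm{\delta}^\ast_{i,j})\le J(x,\bm{\delta})$ you would also need $\liminf_n J(x_n,\bm{\delta})\le J(x,\bm{\delta})$, i.e., upper semicontinuity along the sequence. But $J(\cdot,\bm{\delta})$ extended by $+\infty$ off $\mathcal{X}_{0,\bm{\delta}}$ is only lower semicontinuous: if a competing $\bm{\delta}$ is feasible at $x$ but infeasible at every $x_n$ (i.e., $\mathcal{X}_{0,\bm{\delta}}$ touches $\bar{\mathcal{R}}_{i,j}$ only along its boundary), the right-hand side is $+\infty$ and the limit yields nothing; this is precisely the situation in which the mixed-integer value function $J(x)$ can jump down at such a boundary, so the missing inequality cannot be recovered by a routine continuity argument. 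What your limit does establish is optimality of $\bm{\delta}^\ast_{i,j}$ at $x$ among the sequences that remain feasible at infinitely many of the $x_n$, not among all $\bm{\delta}$. To be fair, the paper's own proof never addresses the closure qualifier in claim 2 at all, so your attempt is not weaker than the published argument, but the sentence claiming the limit ``yields $\bm{\delta}^\ast_{i,j}\in\argmin_{\bm{\delta}}J(x,\bm{\delta})$'' overstates what it proves and should be flagged or weakened accordingly.
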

\begin{proof}
	\textit{1)} For $x \in \bar{P}_i$ there exists a $\bm{\delta}^\ast(x)$ for which the first element must be $i$ by constraints \eqref{eq:first_cnstr} and \eqref{eq:region_constraint} for $k=0$.
	Hence, $\bm{\delta}^\ast(x) \in \mathcal{L}_i$ for all $x \in \bar{P}_i$.
	\textit{2)} As states can be feasible for multiple choices of $\bm{\delta}$, the sets $\mathcal{X}_{0, \bm{\delta}}$ can overlap, e.g., Figure \ref{fig:prop_1_b}.
	An optimal $\bm{\delta}^\ast(x)$ for a state $x \in \bar{P}_i$ where multiple switching sequences $\bm{\delta}_a, \bm{\delta}_b, \dots \in \mathcal{L}_i$ are feasible is found by comparing the values $J(x, \bm{\delta}_a), J(x, \bm{\delta}_b), \dots,$ and taking the minimum.
	Consider the simple case of two such sets $\mathcal{X}_{0, \bm{\delta}_a} \cup \mathcal{X}_{0, \bm{\delta}_b} \neq \emptyset$.
	Then $\bm{\delta}^\ast(x) = \bm{\delta}_a$ for $x$ in
	\begin{equation}
		\big\{x | \mathcal{X}_{0, \bm{\delta}_a} \cup \mathcal{X}_{0, \bm{\delta}_b}, J(x, \bm{\delta}_a) - J(x, \bm{\delta}_b) \leq 0\big\} \cap (\mathcal{X}_{0, \bm{\delta}_a} \setminus \mathcal{X}_{0, \bm{\delta}_b}),
	\end{equation}
	which, as $J(x, \bm{\delta}_a)$ and $J(x, \bm{\delta}_b)$ are piecewise-linear functions of $x$ \cite{borrelliPredictiveControlLinear2016}, can be represented as the union of a finite number of polytopes.
	Generalizing this to many overlapping sets, the region defining where a given sequence is optimal can be decomposed into a finite number of polytopes, i.e., the regions $\mathcal{R}_{i, j}$.
	with the definition of $\mathcal{L}_i$ we have $\bigcup_{\bm{\delta} \in \mathcal{L}_i} \mathcal{X}_{0, \bm{\delta}} = \bar{P}_i \cap \mathcal{X}_0$, e.g., Figure \ref{fig:prop_1_c}.
	It follows that the collection $\mathcal{R}_{i,1},\dots,\mathcal{R}_{i, R_i}$ is a convex partition of $P_i \cap \mathcal{X}_0$.
	\textit{3)} Finally, as $\bigcup_{i \in \mathcal{L}} (P_i \cap \mathcal{X}_0) = \mathcal{X}_0$, the collection $\mathcal{R}_{1,1},\dots,\mathcal{R}_{1, R_1},\dots,\mathcal{R}_{l,1},\dots,\mathcal{R}_{l, R_l}$ is a convex partition of $\mathcal{X}_0$.
\end{proof}
This result informs us that $\bm{\delta}^\ast(x)$ can be characterized by a convex partition, where each polytope in the partition is associated with a value for $\bm{\delta}^\ast(x)$.
Furthermore, as $\mathcal{X}_0$ can be expressed as a union of polytopes, $\mathcal{X}\setminus\mathcal{X}_0$ can also be expressed as a union of polytopes, such that by learning a convex partition over $\mathcal{X}$ we can represent both $\mathcal{X}_0$ and $\bm{\delta}^\ast(x)$ over $\mathcal{X}_0$.\end{color}

\subsection{Feasibility of $J(x, \vect{\delta})$}
\begin{proposition}\label{prop:2}
	For any $\bm{\delta}$ and the set of states $\tilde{\mathcal{X}} \subseteq \mathcal{X}_0$ such that $J(x, \bm{\delta}) < \infty$ for all $x \in \tilde{\mathcal{X}}$, we have $J(x, \bm{\delta}) < \infty$ for all $x \in \text{Conv}(\tilde{\mathcal{X}})$.
\end{proposition}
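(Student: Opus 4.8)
The plan is to exploit the fact that, once $\bm{\delta}$ is fixed, problem \eqref{eq:mpc} is a linear feasibility problem, so the set of initial states admitting a feasible trajectory is convex. I would argue this directly by showing the feasible trajectory set is closed under convex combinations. Take any two points $x_1, x_2 \in \tilde{\mathcal{X}}$. Since $J(x_1,\bm{\delta})<\infty$ and $J(x_2,\bm{\delta})<\infty$, there exist state-input trajectories $(\mathbf{x}^1,\mathbf{u}^1)$ and $(\mathbf{x}^2,\mathbf{u}^2)$ feasible for \eqref{eq:mpc} with $x^1(0)=x_1$ and $x^2(0)=x_2$. For $\lambda\in[0,1]$ define the candidate trajectory $x^\lambda(k) = \lambda x^1(k) + (1-\lambda) x^2(k)$ and $u^\lambda(k) = \lambda u^1(k) + (1-\lambda) u^2(k)$.

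Next I would verify that this candidate satisfies every constraint of \eqref{eq:mpc} for the same $\bm{\delta}$: its initial state is $\lambda x_1 + (1-\lambda) x_2$; the dynamics \eqref{eq:dynamic_constraint} hold because $A_{\delta(k)} x^\lambda(k) + B u^\lambda(k) + c_{\delta(k)} = \lambda\big(A_{\delta(k)}x^1(k)+Bu^1(k)+c_{\delta(k)}\big) + (1-\lambda)\big(A_{\delta(k)}x^2(k)+Bu^2(k)+c_{\delta(k)}\big) = \lambda x^1(k+1) + (1-\lambda)x^2(k+1) = x^\lambda(k+1)$, where the key point is that the affine offset $c_{\delta(k)}$ reappears exactly because $\lambda + (1-\lambda) = 1$; and the set constraints \eqref{eq:region_constraint}, $\mathcal{X}\times\mathcal{U}$, and \eqref{eq:second_cnstr} are all (closures of) polytopes, hence convex, so they are preserved under the convex combination. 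Therefore $x^\lambda \in \mathcal{X}_{0,\bm{\delta}}$, i.e., $J\big(\lambda x_1 + (1-\lambda)x_2,\bm{\delta}\big)<\infty$.

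Finally I would lift the two-point statement to arbitrary convex combinations: any $x\in\text{Conv}(\tilde{\mathcal{X}})$ can be written as a finite convex combination $\sum_{j=1}^M \mu_j y_j$ with $y_j\in\tilde{\mathcal{X}}$, and applying the above argument inductively (equivalently, taking the $\mu$-weighted convex combination of the $M$ feasible trajectories directly, which again satisfies the dynamics and the convex constraints by the same computation) yields $J(x,\bm{\delta})<\infty$. Alternatively, one can simply invoke the fact recalled after \eqref{eq:mpc} that $\mathcal{X}_{0,\bm{\delta}}$ is a polytope: since $\tilde{\mathcal{X}}\subseteq\mathcal{X}_{0,\bm{\delta}}$ and a polytope is convex, $\text{Conv}(\tilde{\mathcal{X}})\subseteq\text{Conv}(\mathcal{X}_{0,\bm{\delta}})=\mathcal{X}_{0,\bm{\delta}}$, giving the claim immediately. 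I do not expect a genuine obstacle here; the only steps requiring a moment's care are the bookkeeping of the offset $c_{\delta(k)}$ under convex combinations (which works precisely because the coefficients sum to one) and, for a self-contained proof, the induction from pairs to general finite convex combinations.
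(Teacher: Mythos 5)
Your proof is correct, and your main line of argument is genuinely more self-contained than the paper's. The paper disposes of the claim in three sentences by contradiction: if some $x \in \text{Conv}(\tilde{\mathcal{X}})$ had $J(x,\bm{\delta}) = \infty$, then $\mathcal{X}_{0,\bm{\delta}}$ would be non-convex, contradicting the fact, stated just after \eqref{eq:mpc} and imported from the reference on predictive control for linear systems, that $\mathcal{X}_{0,\bm{\delta}}$ is a polytope. That is precisely the ``alternative'' shortcut you mention in your last paragraph ($\tilde{\mathcal{X}} \subseteq \mathcal{X}_{0,\bm{\delta}}$ convex $\Rightarrow \text{Conv}(\tilde{\mathcal{X}}) \subseteq \mathcal{X}_{0,\bm{\delta}}$), so you have in effect reproduced the paper's argument as a remark. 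Your primary route instead proves the underlying convexity from first principles: interpolating two feasible trajectories, noting that the affine dynamics with offset $c_{\delta(k)}$ are preserved because the coefficients sum to one, that $\bar{P}_{\delta(k)}$, $\mathcal{X}$, $\mathcal{U}$, $\mathcal{X}_\text{f}$ are convex, and then passing from pairs to finite convex combinations. What your construction buys is independence from the cited polytopality result (you only need convexity, which you establish directly) and an explicit certificate of feasibility at the interpolated state; what the paper's version buys is brevity, at the price of leaning on an external fact and a contrapositive phrasing that does not exhibit the feasible trajectory. Both are sound; no gap in your argument, since feasibility immediately gives a finite cost for the finite sum of norms.
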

\begin{proof}
	Assume there exists $x \in \text{Conv}(\tilde{\mathcal{X}})$ such that $J(x, \bm{\delta}) = \infty$. 
	Then the set $\mathcal{X}_{0, \bm{\delta}}$ is non-convex.
	This is a contradiction; hence, $J(x, \bm{\delta}) < \infty$ for all $x \in \text{Conv}(\tilde{\mathcal{X}})$.
\end{proof}
Proposition \ref{prop:2} provides a tool for checking the feasibility of a given $\bm{\delta}$ over an infinite number of states by checking feasibility over a finite number of states, i.e., given a polytope $\mathcal{R} \subseteq \mathcal{X}_0$, the feasibility of $\bm{\delta}$ for the \emph{entire} polytope can be confirmed by confirming that $J(x, \bm{\delta}) < \infty$ for all $x \in \text{Vert}(\mathcal{R})$.
\section{LEARNING-BASED MPC CONTROLLER}\label{sec:learning_based_mpc}
In this section we formulate our learning-based MPC solution.
We propose to learn a policy $\pi_\theta(x) = \bm{\delta}$ offline in a supervised-learning manner, where $\theta$ is a vector of parameters to be learned, and pairs of training data $\big(x, \bm{\delta}^\ast(x)\big)$ are generated by solving \eqref{eq:mpc_full}.
Then, for deployment online, only the LP \eqref{eq:mpc} must be solved for $J\big(x, \pi_\theta(x)\big)$.

\subsection{Classifier}
Proposition \ref{prop:1} suggests that $\pi_\theta$ should be a classifier that can effectively approximate a convex partition, while Proposition \ref{prop:2} provides a tool to verify feasibility, provided $\pi_\theta$ partitions its outputs into polytopes.
In light of this, we formulate the following structure that defines a class of suitable classifiers: $\pi_\theta(x) = \phi_{\theta, i}\big(f_{\theta, i}(x)\big)$ for $x \in P_i$, with
\begin{equation} \label{eq:classifier_struct}
	\begin{aligned}
		&f_{\theta, i}: P_i \to \{\mathcal{R}_{i, 1}, \dots, \mathcal{R}_{i, \bar{R}_i}\}\\
		&\phi_{\theta, i}: \{\mathcal{R}_{i, 1}, \dots, \mathcal{R}_{i, \bar{R}_i}\} \to \mathcal{L}_i \cup \{-1\},
	\end{aligned}
\end{equation}
where each collection $\mathcal{R}_{i, 1},\dots,\mathcal{R}_{i, \bar{R}_i}$ is a convex partition of $P_i$.
The function $\phi_{\theta, i}$ maps the polytopes to either a switching sequence $\bm{\delta}$, or to $-1$ representing infeasibility, i.e., $x \notin \mathcal{X}_0$.
For convenience, denote the codomains of $f_{\theta, i}$ and $\phi_{\theta, i}$ as $\mathcal{F}_{\theta, i}$ and $\Phi_i$, respectively.
Note that the number of polytopes $\bar{R}_i$ depends on the training data.

Many classifiers can satisfy this structure, e.g., decision trees, oblique decision trees, neural networks with linear activation functions, and support vector machines \cite{bishop2006pattern}.
In \cite{bemporadPiecewiseLinearRegression2023} a classifier tailored to learning functions that are PWA across polytopes is presented, alongside a thorough review of alternative approaches.

\subsection{Iterative Training Procedure}
Learning $\pi_\theta$ in a supervised manner requires the solution to problem \eqref{eq:mpc_full} in order to generate training data.
As solving \eqref{eq:mpc_full} is computationally demanding, it is desirable to train $\pi_\theta$ with as little data as possible.
Furthermore, given the structure of $\bm{\delta}^\ast(x)$ outlined in Section \ref{sec:properties}, it is clear that some states are more informative than others.
In particular, states near the boundaries of the polytopes forming the convex partition are more useful than those in the interiors. 

\begin{algorithm}
	\caption{Train $\pi_\theta$ (offline)}\label{alg:train}
	\begin{algorithmic}[1]
		\State \textbf{Inputs}: Initial training set $\mathcal{T} \subset \mathcal{X} \times \bigcup_{i \in \mathcal{L}} \Phi_i$
		\While{true}
			\State $\theta \gets \text{train}(\mathcal{T})$ \label{step:train}
			\State $\text{feasible} \gets 1$
			\State $\mathcal{Z} \gets \{\}$
			\For{$i \in \mathcal{L}$}
				\For{$\mathcal{R} \in \mathcal{F}_{\theta, i}$}
					\For{$x \in \text{Vert}(\mathcal{R})$}
						\If{$\Big(\phi_{\theta, i}(\mathcal{R}) = -1$ and $J_\text{t}(x) < \infty\Big)$ or $\Big(\phi_{\theta, i}(\mathcal{R}) \neq -1$ and $J\big(x, \pi_\theta(x)\big) = \infty\Big)$}\label{step:big_if}
							\State $\mathcal{Z} \gets \mathcal{Z} \cup \{x\}$
							\State feasible $\gets 0$
						\EndIf
					\EndFor
				\EndFor
			\EndFor
			\If{feasible $= 1$}
				\State \textbf{Return}: $\pi_\theta$
			\Else
				\For{$x \in \mathcal{Z}$}
					\State Solve \eqref{eq:mpc_full} for $J(x)$ and $\bm{\delta}^\ast$\label{step:solve}
					\If{$J(x) < \infty$}
						\State $\mathcal{T} \gets \mathcal{T} \cup \{(x, \bm{\delta}^\ast)\}$
					\Else
						\State $\mathcal{T} \gets \mathcal{T} \cup \{(x, -1)\}$
					\EndIf
				\EndFor
			\EndIf
		\EndWhile
	\end{algorithmic}
\end{algorithm}
Addressing both these points, Algorithm \ref{alg:train} details the offline training procedure.
In step \ref{step:train}, train($\mathcal{T}$) is pseudo-code for a supervised learning algorithm that trains the classifier given a set of labeled data $\mathcal{T} = \big\{\big(x^{(i)}, \bm{\delta}^{\ast}(x^{(i)})\big)\big\}_{i}^{N_\text{data}}$, and is specific to the form of $\pi_\theta$.
Also specific to the form of $\pi_\theta$ is the presence and degree of classifier misclassification, i.e., how often $\pi_\theta$ chooses the incorrect switching sequence for a state in the training data $\mathcal{T}$,
which can aid in finding a feasible $\pi_\theta$ with a simple partition, i.e., with less polytopes, but can increase suboptimality. 
Beginning from an initial training set, $\pi_\theta$ is trained iteratively, where at each iteration, all vertices of all learned polytopes are checked for feasibility.
Only vertices that are infeasible are added to the training set, naturally concentrating the training data in the areas of high information gain, i.e., at the boundaries of the polytopes of the partition.
\begin{remark}
	Clearly the number of iterations and the amount of data required for convergence of Algorithm \ref{alg:train} depends on the underlying structure of $\bm{\delta}^\ast(x)$, which in general grows in complexity with $N$ and $l$.
	An analysis of the scalability of Algorithm \ref{alg:train} with $N$ and $l$ is left for future work.
\end{remark}

\subsection{Approximating $\mathcal{X}_0$}
\begin{color}{black}
In order to approximate the feasible region $\mathcal{X}_0$, the boundary of $\mathcal{X}_0$ is handled as a special case.
As the PWA dynamics are continuous, several switching sequences can be feasible for a given state, e.g., $x \in \bar{P}_i\cap\bar{P}_j$, such that the shared vertices of adjacent polytopes can be feasible for the switching sequence of each polytope.
However, it is not the case that the shared vertices of polytopes can be used to verify feasibility and infeasibility, as this would imply both $J(x) < \infty$ and $J(x) = \infty$ at the boundary of $\mathcal{X}_0$.
In light of this, we introduce the tightened version of \eqref{eq:mpc_full}:
\begin{equation}\label{eq:mpc_tight}
	\begin{aligned}
		J_\text{t}(x) = &\min_{\mathbf{x}, \mathbf{u}, \bm{\delta}} \: J(\mathbf{x}, \mathbf{u}) \\
		\text{s.t.} \quad &\eqref{eq:first_cnstr}-\eqref{eq:region_constraint} \\
		&\big(x(k), u(k)\big) \in \big(\mathcal{X} \ominus \mathcal{O}\big) \times \mathcal{U} \nonumber \\
		&\quad \text{for}\:\: k = 0, \dots, N-1, \\
		&x(N) \in \mathcal{X}_\text{f}\ominus \mathcal{O},
	\end{aligned}
\end{equation}
where $\mathcal{O}$ is an arbitrarily small set containing the origin.
Again, if no solution exists for \eqref{eq:mpc_tight} then $J_\text{t}(x) = \infty$.
The feasible set for \eqref{eq:mpc_tight} is contained within $\mathcal{X}_0$, such that on the boundary of $\mathcal{X}_0$ we have $J(x) < \infty$ and $J_\text{t}(x) = \infty$.
In Algorithm \ref{alg:train}, when verifying that the vertices of a polytope labeled with $-1$ are infeasible, in step \ref{step:big_if}, we then use $J_\text{t}(x)$.
The result is that an under-approximation of $\mathcal{X}_0$ is learned.
However the approximation error can be made arbitrarily small by reducing $\mathcal{O}$.  
Define this feasible set, representing the valid region for $\pi_\theta$, as $\mathcal{X}_{0, \theta} = \big\{x | \pi_\theta(x) \neq -1\big\}$.\end{color}
\begin{proposition}\label{prop:3}
	For $\pi_\theta$ trained with Algorithm \ref{alg:train}, $J\big(x, \pi_\theta(x)\big) < 0$ for all $x \in \mathcal{X}_{0, \theta}$.
\end{proposition}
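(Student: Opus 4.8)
The bound as written, $J\big(x,\pi_\theta(x)\big)<0$, cannot hold substantively: the cost $J(\mathbf{x},\mathbf{u})$ is a sum of norms and is therefore nonnegative whenever it is finite, so the inequality can only be read as $J\big(x,\pi_\theta(x)\big)<\infty$, i.e.\ the LP \eqref{eq:mpc} associated with the policy's output is feasible. This is precisely the feasibility guarantee advertised for the method, and I treat the statement under this reading. The plan is to combine the termination condition of Algorithm \ref{alg:train} with Proposition \ref{prop:2}, lifting vertex-wise feasibility to feasibility over each polytope of the learned partition.

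First I would unpack the definition $\mathcal{X}_{0,\theta}=\{x\mid\pi_\theta(x)\neq -1\}$. By the classifier structure \eqref{eq:classifier_struct}, any $x\in\mathcal{X}_{0,\theta}$ with $x\in P_i$ lies in a polytope $\mathcal{R}=f_{\theta,i}(x)\in\mathcal{F}_{\theta,i}$ whose label $\bm{\delta}=\phi_{\theta,i}(\mathcal{R})\neq -1$. Since $\phi_{\theta,i}$ assigns a single value to each cell, the policy is constant on $\mathcal{R}$, so $\pi_\theta(x')=\bm{\delta}$ and $J\big(x',\pi_\theta(x')\big)=J(x',\bm{\delta})$ for every $x'\in\mathcal{R}$. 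It therefore suffices to prove $J(x',\bm{\delta})<\infty$ for all $x'\in\bar{\mathcal{R}}$, and then range $\mathcal{R}$ over all labelled polytopes, whose union is exactly $\mathcal{X}_{0,\theta}$.

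Next I would exploit the exit condition of the while-loop. Algorithm \ref{alg:train} returns $\pi_\theta$ only when $\text{feasible}=1$, which means the test in step \ref{step:big_if} failed at every vertex of every polytope. For a cell $\mathcal{R}$ labelled $\bm{\delta}\neq -1$, a failed test is precisely $J\big(x,\pi_\theta(x)\big)<\infty$ for each $x\in\text{Vert}(\mathcal{R})$; in particular each such vertex lies in $\mathcal{X}_0$, since $J(x,\bm{\delta})<\infty$ forces $J(x)<\infty$. Taking the finite witness set $\tilde{\mathcal{X}}=\text{Vert}(\mathcal{R})\subseteq\mathcal{X}_0$ in Proposition \ref{prop:2} yields $J(x,\bm{\delta})<\infty$ for all $x\in\text{Conv}\big(\text{Vert}(\mathcal{R})\big)=\bar{\mathcal{R}}$, the last equality holding because a bounded polytope is the convex hull of its vertices. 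Since $\mathcal{R}\subseteq\bar{\mathcal{R}}$, this covers the cell, and taking the union over all labelled cells gives $J\big(x,\pi_\theta(x)\big)<\infty$ on all of $\mathcal{X}_{0,\theta}$.

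I expect the main obstacle to be the careful bookkeeping around the hypotheses of Proposition \ref{prop:2} and around closures. The proposition requires its finite witness set to lie inside $\mathcal{X}_0$, so one must explicitly observe that vertex feasibility for $\bm{\delta}$ certifies membership in $\mathcal{X}_0$; and one must invoke boundedness of the cells (inherited from compactness of $\mathcal{X}$) to identify $\bar{\mathcal{R}}$ with $\text{Conv}\big(\text{Vert}(\mathcal{R})\big)$, so that certifying the finitely many vertices certifies the whole closed cell. The remaining ingredient, that $\phi_{\theta,i}$ is constant on each $\mathcal{R}$, is what lets the vertex certificate transfer unchanged to interior points, closing the argument.
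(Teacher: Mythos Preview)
Your proof is correct and follows essentially the same route as the paper: use the termination condition of Algorithm \ref{alg:train} to obtain vertex-wise feasibility on each labelled cell $\mathcal{R}$, invoke Proposition \ref{prop:2} to extend this to all of $\bar{\mathcal{R}}$, and take the union over labelled cells to cover $\mathcal{X}_{0,\theta}$. Your version is more careful than the paper's---you explicitly note the $<0$ versus $<\infty$ typo, verify the $\tilde{\mathcal{X}}\subseteq\mathcal{X}_0$ hypothesis of Proposition \ref{prop:2}, and justify $\bar{\mathcal{R}}=\text{Conv}\big(\text{Vert}(\mathcal{R})\big)$---but the argument is the same.
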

\begin{proof}
	For $\pi_\theta$ trained with Algorithm \ref{alg:train} it holds, for all $i \in \mathcal{L}$ and for all $\mathcal{R} \in \mathcal{F}_{\theta, i}$ such that $\phi(\mathcal{R}) \neq -1$, that $J\big(x, \pi_\theta(x)\big) < \infty$ for all $x \in \text{Vert}(\mathcal{R})$ and hence, by Proposition \eqref{prop:2}, for all $x \in \bar{\mathcal{R}}$.
	As, by the definition \eqref{eq:classifier_struct}, the collection of sets $\mathcal{R}$ for all $\phi_{\theta, i}(\mathcal{R}) \neq -1$ is a convex partition of $P_i \cap \mathcal{X}_{0, \theta}$, it holds that $J\big(x, \pi_\theta(x)\big) < 0$ for all $x \in \mathcal{X}_{0, \theta}$.
\end{proof}
\begin{remark}
	In Algorithm \ref{alg:train}, an MILP is solved only when generating training data at the infeasible vertices, i.e., step \ref{step:solve}.
	Checking for $J_\text{t}(x) < \infty$ in step \ref{step:big_if} involves checking for the existence of a feasible solution only, which is much easier than solving the problem to optimality.
\end{remark}

\subsection{Control Law}
With $\pi_\theta$ we now define the controller in Algorithm \ref{alg:control}.
The check in step \ref{step:check} is required as $\mathcal{X}_{0, \theta}$ can be an under-approximation of $\mathcal{X}_0$, such that a feasible trajectory could in theory pass out of $\mathcal{X}_{0, \theta}$ before entering $\mathcal{X}_\text{f}$.
We now prove recursive feasibility of the controller.
\begin{proposition}\label{prop:4}
	For $x_0 \in \mathcal{X}_{0, \theta}$, for the closed-loop system and with $\bm{\delta}_k$ chosen as by Algorithm \ref{alg:control}, we have $J(x_k, \bm{\delta}_k) < \infty$ for all $k \geq 0$.
\end{proposition}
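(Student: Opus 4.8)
The plan is to prove the claim by induction on the closed-loop time step $k$, distinguishing the two situations that Algorithm~\ref{alg:control} resolves through the check in step~\ref{step:check}: when the learned policy returns a valid switching sequence I would invoke Proposition~\ref{prop:3}, and when it returns $-1$ I would use the classical ``shift the previous solution and append a terminal step'' construction. For the base case, $x_0 \in \mathcal{X}_{0,\theta}$ means $\pi_\theta(x_0) \neq -1$, so Proposition~\ref{prop:3} gives $J\big(x_0,\pi_\theta(x_0)\big) < \infty$; since $\bm{\delta}_0 = \pi_\theta(x_0)$, the statement holds at $k = 0$.

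For the inductive step, assume $J(x_k,\bm{\delta}_k) < \infty$, let $(\mathbf{x}^\ast,\mathbf{u}^\ast)$ be a minimizer of \eqref{eq:mpc} for $(x_k,\bm{\delta}_k)$, and recall that the applied input is $u^\ast(0)$, so $x_{k+1} = x^\ast(1)$. I would then split according to step~\ref{step:check} of Algorithm~\ref{alg:control}. If $\pi_\theta(x_{k+1}) \neq -1$, then $x_{k+1}\in\mathcal{X}_{0,\theta}$ and Proposition~\ref{prop:3} directly yields $J\big(x_{k+1},\pi_\theta(x_{k+1})\big) < \infty$; since $\bm{\delta}_{k+1} = \pi_\theta(x_{k+1})$, the step is complete. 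If $\pi_\theta(x_{k+1}) = -1$, then Algorithm~\ref{alg:control} takes $\bm{\delta}_{k+1}$ to be the sequence obtained from $\bm{\delta}_k$ by deleting its first entry and appending a region index for the terminal step, and I would prove $J(x_{k+1},\bm{\delta}_{k+1}) < \infty$ by exhibiting one feasible point of \eqref{eq:mpc}: the shifted trajectory $x(j) = x^\ast(j+1)$ for $j = 0,\dots,N-1$ and $u(j) = u^\ast(j+1)$ for $j = 0,\dots,N-2$, together with $u(N-1)$ equal to the terminal control value at $x^\ast(N)$ and $x(N)$ its successor state. Checking feasibility of this point is routine: the region constraints \eqref{eq:region_constraint} and the dynamics \eqref{eq:dynamic_constraint} for the first $N-1$ prediction steps are inherited from the optimal trajectory for $\bm{\delta}_k$; the final step is governed by the terminal control law; the state and input constraints hold because the first $N-1$ steps are feasible points of \eqref{eq:mpc} at time $k$ and the terminal law keeps $(x(N),u(N-1))$ admissible with $x(N)\in\mathcal{X}_\text{f}\subseteq\mathcal{X}$; and the terminal constraint \eqref{eq:second_cnstr} is precisely positive invariance of $\mathcal{X}_\text{f}$ under that law (Assumption~1). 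Iterating, $J(x_k,\bm{\delta}_k) < \infty$ for all $k \ge 0$.

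I expect the main obstacle to be the second case, which is exactly where the under-approximation $\mathcal{X}_{0,\theta}\subseteq\mathcal{X}_0$ bites: one must verify that the shifted candidate stays feasible for the \emph{original}, untightened problem \eqref{eq:mpc} --- it does, since the optimizer for $\bm{\delta}_k$ already satisfies the untightened constraints \eqref{eq:first_cnstr}--\eqref{eq:second_cnstr} and terminates in $\mathcal{X}_\text{f}$ --- and, more delicately, one must select region indices at the (possibly boundary) state $x^\ast(N)\in\mathcal{X}_\text{f}$ for which Assumption~1 supplies a terminal control law and for which continuity of the dynamics keeps the construction well defined; this also tacitly uses, as is standard for terminal-set arguments, that the terminal law keeps the input within $\mathcal{U}$ over $\mathcal{X}_\text{f}$.
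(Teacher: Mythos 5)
Your proposal is correct and follows essentially the same argument as the paper: induction with the base case from Proposition~\ref{prop:3}, the case $\pi_\theta(x_{k+1})\neq -1$ again via Proposition~\ref{prop:3}, and the case $\pi_\theta(x_{k+1})=-1$ handled by the shifted solution appended with the terminal control law, using positive invariance of $\mathcal{X}_\text{f}$. Your extra remarks on the region indices at $x^\ast(N)$, continuity at region boundaries, and the tacit admissibility of the terminal inputs only make explicit details the paper's proof glosses over.
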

\begin{proof}
	Assume that at time step $k$ we have $J(x_k, \bm{\delta}_k) < \infty$. 
	Applying $u_k^\ast(0)$, obtained from \eqref{eq:mpc}, will propagate the system to $x_{k+1} = x_k^\ast(1)$.
	If $\pi_\theta(x_{k+1}) = -1$, then the shifted solution $\bar{u}_k = \big(u_k^\ast(1),\dots,u_k^\ast(N-1), K_jx_k^\ast(N)\big)$ is feasible for problem \eqref{eq:mpc} with the shifted sequence $\bm{\delta}_{k+1} = \big(\delta_{k}(1),\dots,\delta_{k}(N), i\big)$ and with $(A_j + B K_j)x_{k}^\ast(N) \in P_i$ and $x_{k}^\ast(N) \in P_j$, as $\mathcal{X}_\text{f}$ is forward invariant under the linear controllers.
	Hence, we have $J(x_{k+1}, \bm{\delta}_{k+1}) < \infty$.
	If $\pi_\theta(x_{k+1}) \neq -1$, then $\bm{\delta}_{k+1} = \pi_\theta(x_{k+1})$ and $J(x_{k+1}, \bm{\delta}_{k+1}) < \infty$ by Proposition \ref{prop:3}.
	Therefore, if $J(x_k, \bm{\delta}_k) < \infty$ at time step $k$ then we have $J(x_{k+1}, \bm{\delta}_{k+1}) < \infty$ at time step $k+1$.
	For $x_0 \in \mathcal{X}_{0, \theta}$ we have $J(x_{0}, \bm{\delta}_{0}) < \infty$ again by Proposition \ref{prop:3}, and by induction  $J(x_{k}, \bm{\delta}_{k}) < \infty$ for all $k \geq 0$.
\end{proof}
\begin{algorithm}
	\caption{Controller at time step $k$ (online)}\label{alg:control}
	\begin{algorithmic}[1]
		\State \textbf{Inputs}: Current state $x_k$, previous switching sequence $\bm{\delta}_{k-1}$ (except if $k = 0$)
		\If{$k = 0$ or $\pi_\theta(x_k) \neq -1$}\label{step:check}
			\State $\bm{\delta}_k \gets \pi_\theta(x_k)$
		\Else
			\State $\bm{\delta}_k \gets \big(\delta_{k-1}(1),\dots,\delta_{k-1}(N), i\big)$ with $(A_j + B K_j)x_{k-1}^\ast(N) \in P_i$ and $x_{k-1}^\ast(N) \in P_j$
		\EndIf
		\State Solve $J(x_k, \bm{\delta}_k)$ and apply $u^\ast(0)$		
	\end{algorithmic}
\end{algorithm}

\section{NUMERICAL EXAMPLE}\label{sec:example}
Consider a representative numerical PWA system \eqref{eq:dynamics} with $l = 2$, 
\begin{equation}
	A_1 = \begin{bmatrix}
		1 & 0.2 \\ 0 & 1
	\end{bmatrix} \: A_2 = \begin{bmatrix}
	0.5 & 0.2 \\ 0 & 1
	\end{bmatrix} \: c_1 = \begin{bmatrix}
	0 \\ 0
	\end{bmatrix} \: c_2 = \begin{bmatrix}
	0.5 \\ 0
	\end{bmatrix},
\end{equation}
$B = \begin{bmatrix}
	0.1 & 1
\end{bmatrix}^\top$, and $P_{1(2)} = \{x | \begin{bmatrix}
1 & 0
\end{bmatrix}x \leq(>) 1\}$.
The input and state constraints are $\mathcal{U} = \big\{u \big| |u| \leq 3\big\}$ and $\mathcal{X} = \{x | Dx \leq E\}$, respectively, with
\begin{equation}
	\begin{aligned}
		D &= \begin{bmatrix}
			-1 & -3 & 0.2 & -1 & 1 & 0 \\
			1 & -1 & 1 & 0 & 0 & -1
		\end{bmatrix}^\top \\ E &= \begin{bmatrix}
			15 & 25 & 9 & 6 & 8 & 10
		\end{bmatrix}^\top.
	\end{aligned}
\end{equation}
Consider the MPC controller \eqref{eq:mpc_full} with cost matrices $Q = P = \mathbf{I}_2$, $R = 1$, and using the 1-norm, i.e., $p_1 = p_2 = p_3 = 1$.
The terminal set is computed as the maximal constraint admissible set for the system $x(k+1) = (A_1 + B K)x$, restricted to $P_1$, where $K$ is the optimal LQR gain controller for $(A_1, B, Q, R)$.
The example is simulated on an 11th
Gen Intel laptop with four i7 cores, 3.00GHz clock speed,
and 16Gb of RAM.
For fairness, all MILP and LP optimization problems are solved with Gurobi \cite{gurobi} without warm-starting, while explicit MPC controllers are generated with the MPT3 toolbox \cite{MPT3}.
Source code is available at \url{https://github.com/SamuelMallick/supervised-learning-pwa-mpc}.
For $\pi_\theta$ we use an ensemble of the classifier presented in \cite{bemporadPiecewiseLinearRegression2023}, with one for each $P_i$, with all tunable hyperparameters for the classifier available in the source code.

Algorithm \ref{alg:train} is used to train $\pi_\theta$ using an initial training set $\mathcal{T} = \big\{\big(x^{(i)}, \bm{\delta}^{\ast}(x^{(i)})\big)\big\}_{i = 1}^{45} \cup \big\{\big(x^{(j)}, \bm{\delta}^{\ast}(x^{(j)})\big)\big\}_{j = 1}^{45}$ with $x^{(i)}$ and $x^{(j)}$ sampled uniformly from $P_1$ and $P_2$, respectively.
The tightened MPC problem \eqref{eq:mpc_tight} is implemented with $\mathcal{O}=\big\{x \big| \|x\|_\infty < 0.1\big\}$.
Figure \ref{fig:training} demonstrates Algorithm \ref{alg:train} at iterations 0, 8 and 42, for $N=12$.
It can be seen that across the iterations, training data is added around areas of importance, allowing a feasible $\pi_\theta$ to be learned efficiently.
At the first iteration training data is sampled randomly, and the partition is poor, with many infeasible vertices.
By the eighth iteration, following Algorithm \ref{alg:train}, more points have been sampled at the infeasible vertices of the previous iterations, and the partition is significantly improved.
Finally, at iteration 42 the partition is validated to be feasible for all $x \in \mathcal{X}_{0, \theta}$, and the training is terminated. 
\begin{figure}
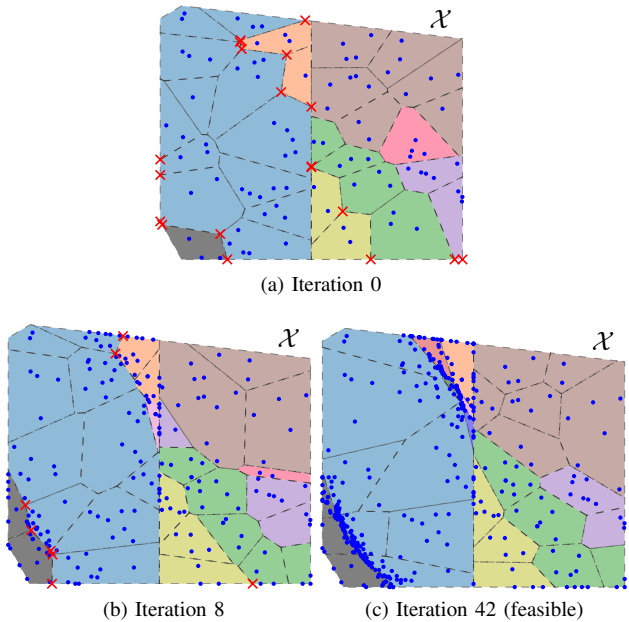

\centering
\subfloat[Iteration 0]{\input{media/tikz/figure_1_parc_0}} \\
\subfloat[Iteration 8]{\input{media/tikz/figure_1_parc_8}}
\subfloat[Iteration 42 (feasible) \label{fig:training_c}]{\input{media/tikz/figure_1_parc_42}}
\caption{Training for $N=12$. Each color represents a different $\bm{\delta}$, while the dark gray region represents infeasibility.}
\label{fig:training}
\end{figure}

Table \ref{tab:our} shows the number of regions in the learned convex partition of $\mathcal{X}$ and the number of iterations required to guarantee feasibility, as the horizon $N$ increases from 5 to 12.
Additionally, the number of training data generated is reported, normalized by that for $N=5$, indicating how the offline computation scales with $N$.
For comparison, an explicit MPC controller \cite{oberdieckExplicitHybridModelPredictive2015} is computed for each $N$.
This involves solving an mp-MILP, resulting in a partition of $\mathcal{X}$ that defines the optimal control law.
Table \ref{tab:exp} shows the number of regions required by the explicit MPC approach, and the computation time required to compute the controller, again normalized by the time for $N=5$.
For $\pi_\theta$, both the number of regions and offline computation are relatively uncorrelated with $N$, while for explicit MPC the number of regions and the offline computation time scale poorly with $N$.
\begin{table}
	\centering
	\caption{Training $\pi_\theta$: Number of training data normalized by the value for $N=5$, with true values in brackets.}
	\begin{tabular}{c||c|c|c}
		\hline
		N & \# regions & \# iterations &\# training data \\ \hline
		5 & 29 & 34& 1 \hfill (377) \\ \hline
		6 & 19 & 46 & 1.170 \hfill (441) \\ \hline
		7 & 19 & 56 & 1.191\hfill (449) \\ \hline
		8 & 26 & 26 & 0.801\hfill (302) \\ \hline
		9 & 27 & 33 & 0.889\hfill (335) \\ \hline
		10 & 24 & 52 & 1.016\hfill (383) \\ \hline
		11 & 25 & 42 & 0.939\hfill (354) \\ \hline
		12 & 30 & 42 & 1.053\hfill (397) \\ \hline
	\end{tabular}
	\label{tab:our}
\end{table}
\begin{table}
	\centering
	\caption{Explicit MPC: Computation time normalized by the value for $N=5$, with true values in brackets.}
	\begin{tabular}{c||c|c}
		\hline
		N & \# regions & comp. time (s) \\ \hline
		5 & 388 & 1\hfill (25.189)  \\ \hline
		6 & 469 & 5.908\hfill (148.807)  \\ \hline
		7 & 520 & 14.781\hfill (372.299)  \\ \hline
		8 & 549 & 40.761\hfill (1.027$\cdot10^3$)  \\ \hline
		9 & 574 & 150.494\hfill (3.791$\cdot10^3$)  \\ \hline
		10 & 600 &464.185\hfill (1.169$\cdot10^4$)  \\ \hline
		11 & 741 & 1.307$\cdot10^3$\hfill (3.292$\cdot10^4$)   \\ \hline
		12 & 2541 & 6.965$\cdot10^3$\hfill (1.754$\cdot10^5$)  \\ \hline
	\end{tabular}
	\label{tab:exp}
\end{table}

To investigate the performance of $\pi_\theta$, the open-loop cost $J\big(x, \pi_\theta(x)\big)$ is compared against the optimal $J(x)$ with $N=12$.
Figure \ref{fig:open_loop} shows the percentage suboptimality: $\Delta J = 100 \cdot \Big(J\big(x, \pi_\theta(x)\big) - J(x)\Big) \big/ J(x)$, for 100709 states sampled densely from $\mathcal{X}_{0, \theta}$.
Suboptimality is only introduced around the boundaries of the convex partition in Figure \ref{fig:training_c}, demonstrating that $\pi_\theta$ learns a convex partition close to the optimal one.
Furthermore, for all states $J\big(x, \pi_\theta(x)\big) < \infty$, demonstrating Proposition \ref{prop:3}.

To investigate the closed-loop performance, Algorithm \ref{alg:control} is compared against the optimal online MPC controller\footnote{The performance of the optimal online MPC controller is the same as that of the explicit MPC controller; however, as the memory requirements for explicit MPC can be unreasonable, we use an online approach and explore its online computational burden.}, where \eqref{eq:mpc_full} is solved, for $N = 12$.
Simulations are run for 1000 initial states, sampled uniformly from $\mathcal{X}_{0, \theta}$, until $\|x\|_2 < 0.01$, with the cumulative closed-loop costs compared.
Table \ref{tab:closed_loop} gives the distribution of the percentage suboptimality, as well as the computation time for each approach.
It can be seen that, as the proposed approach optimizes for the continuous variables online, and as $\bm{\delta} = \pi_\theta(x)$ is suboptimal over only a subset of the state space (see Figure \ref{fig:open_loop}), the closed-loop suboptimality is limited.
Furthermore, the computation time is significantly improved by Algorithm \ref{alg:control}.
Finally, for all time steps of all simulations, $J\big(x_k, \bm{\delta}_k\big) < \infty$, demonstrating Proposition \ref{prop:4}.
\begin{figure}
	\centering
	\includegraphics[width=0.3\textheight]{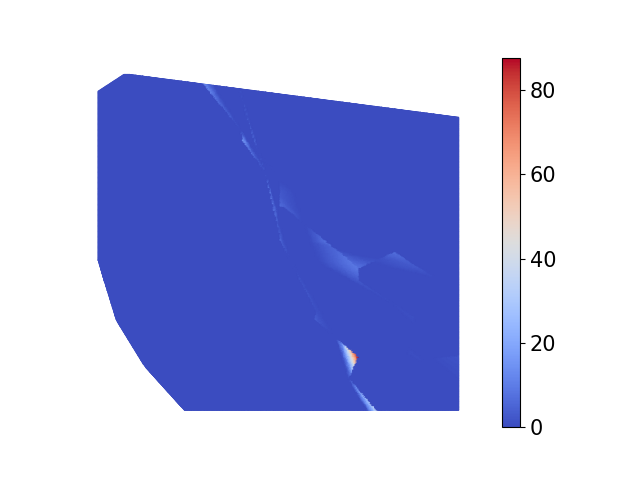}
	\caption{Open-loop suboptimality $\Delta J$ for $N=12$.}
	\label{fig:open_loop}
\end{figure}
\begin{table}
	\centering
	\footnotesize
	\caption{Closed-loop comparison $N = 12$.}
	\begin{tabular}{c|ccccc}
		\hline
		 & mean & median & min & max & std \\
		$\sum_k \Delta J$ & 0.371 & 0 & 0 & 35.970 & 1.808 \\
		Time \eqref{eq:mpc_full} (s) & 0.0323 & 0.0152 & 0.00361 & 0.315 & 0.0362 \\
		Time \eqref{eq:mpc} (s) & 0.00143 & 0.00125 & 0.000605 & 0.0517& 0.00231
	\end{tabular}
	\label{tab:closed_loop}
\end{table}

\section{CONCLUSIONS}\label{sec:conclusions}
In this work we have proposed a learning-based MPC controller for PWA systems where the PWA switching sequences are selected by a learned policy, such that only a linear optimization problem must be solved online.
In contrast to existing works in this direction, feasibility of the controller is guaranteed.  
To this end a class of suitable classifiers has been formulated, and an efficient training strategy has been presented.
A numerical example has demonstrated the benefits of the proposed approach in terms of offline computation and online memory with respect to offline MPC, and online computation with respect to online MPC.
Future work will look at bounding the suboptimality of the proposed approach, as well as further experimentation on larger systems with more regions in the PWA dynamics.




\bibliography{references}

\end{document}